\newcommand{\pve}{\textsf{pos}}
\newcommand{\nve}{\textsf{neg}}
\newcommand{\Ainit}{\ensuremath{A_{\textbf{0}}}}   
\newcommand{\Rp}{\ensuremath{R_{\mathbf{p}}}}
\newcommand{\APA}{\textsf{APA}}
\newcommand{\qinit}{\ensuremath{q_{\textbf{0}}}}
\newcommand{\qend}{\ensuremath{q_{\textbf{f}}}}
\newcommand{\hide}[1]{} 
\newcommand{\lol}[1]{\ensuremath{\multimap^{\!\!\!\!\!\!\!\!\!#1}}}  
\newcommand{\andC}{\textsf{and}}
\begin{document}
     \title{Turing-Completeness of Dynamics in \\ Abstract Persuasion Argumentation} 

\author{Ryuta Arisaka
    \institute{ryutaarisaka@gmail.com} 
    }

    \maketitle
\begin{abstract} 
   Abstract Persuasion Argumentation (\APA) 
   is a dynamic argumentation formalism
   that extends Dung argumentation 
   with persuasion relations. 
   In this work, we show through two-counter Minsky machine encoding that $\APA$ 
   dynamics is Turing-complete. 
\end{abstract} 
\section{Introduction}       
Abstract Persuasion Argumentation ($\APA$) \cite{ArisakaSatoh18} 
is a dynamic argumentation formalism that extends Dung argumentation \cite{Dung95}
with persuasion relations.    Not only can an argument in $\APA$ attack an argument just in Dung argumentation, 
   it can also induce an argument, or convert an argument into another argument. 
   Dung argumentation is a state in $\APA$, which 
   the persuasion relations may modify into other states.  Transitions 
at each state are effected 
with respect to a selected subset of the arguments in the state. The subset - termed a reference set \cite{ArisakaSatoh18} - 
provides defence against external persuaders just as against external attackers. 
From 
the set of all the dynamic modifications by external 
persuaders the reference set does not defend against, 
any number of them may be executed simultaneously into another state.  

In this work, we study the computational 
capability of $\APA$ dynamics, 
its relation to Turing machines, specifically. The investigation 
is useful: as $\APA$ is a conservative extension of Dung argumentation, the persuasions 
specified similarly to attacks, 
Turing-completeness of $\APA$ dynamics induced by persuasions would mean that 
drastic departure from the intuitive Dung-based theoretical framework for dealing with dynamics
would not be necessary.
We settle this research problem positively through 
two-counter Minsky machine \cite{Minsky67} encoding.

\section{Technical Backgrounds} 
\textbf{Two-counter Minsky machines.} Let $\mathbb{N}$ be the class of natural numbers including 0, whose 
member is referred to by $n$ with or without a subscript
and a superscript, 
and let $Q$ be a finite set of abstract entities called states, whose 
member is referred to by $q$ with or without a subscript. 
A two-counter (non-deterministic) Minsky machine \cite{Minsky67} can be defined to be
a tuple $(Q, n_1, n_2, \Rightarrow, \qinit, \qend, \mathcal{I})$ 
with: $n_1, n_2 \in \mathbb{N}$; $\Rightarrow: (Q \backslash \ \{\qend\}) \times \mathbb{N} 
\times \mathbb{N} \rightarrow Q \times \mathbb{N} \times \mathbb{N}$; 
and $\mathcal{I}: (Q \backslash \{\qend\}) \times \{1,2\}  \times Q \times (Q \cup \{\epsilon\})$, with
$\epsilon \not\in Q$.
$\qinit$ is called the initial state, 
and $\qend$ is called the halting state. Each member of $\mathcal{I}$ is called an 
instruction. For every $q 
\in (Q \backslash \{\qend\})$, 
there is some $(q, i, q_2, u) \in \mathcal{I}$ 
for $i \in \{1,2\}$, $q_2 \in Q$, $u \in Q \cup \{\epsilon\}$. 
$\Rightarrow$ is specifically defined to be such that 
$\Rightarrow((q_1, n_1, n_2))$ is:
\begin{itemize}[leftmargin=0cm]
    \item[] $(q_2, n_1 + 1, n_2)$ if $(q_1, 1, q_2, \epsilon) \in \mathcal{I}$.
    \item[]
    $(q_2, n_1, n_2 + 1)$ if $(q_1, 2, q_2, \epsilon) \in \mathcal{I}$.
    \item[] $(q_3, n_1 - 1, n_2)$ if $(q_1, 1, q_2, q_3) \in \mathcal{I}$ $\andC$ $n_1 > 0$.\footnote{``$\andC$'' instead of 
 ``and'' is used in this and all the other papers to be written by 
 the author when the context in which the word appears 
 strongly indicates truth-value comparisons. It follows the semantics 
 of classical logic conjunction.}  
    \item[]
    $(q_3, n_1, n_2 -1)$ if $(q_1, 2, q_2, q_3) \in \mathcal{I}$ 
    $\andC$ $n_2 > 0$.
     \item[] $(q_2, n_1, n_2)$\qquad if $(q_1, 1, q_2, q_3) \in \mathcal{I}$ $\andC$ $n_1 = 0$. 
    \item[]
    $(q_2, n_1, n_2)$\qquad if $(q_1, 2, q_2, q_3) \in \mathcal{I}$ 
    $\andC$ $n_2 = 0$.
\end{itemize}
Minsky machine is said to halt on $(n_1, n_2)$ iff there are 
some $n_x, n'_1, n'_2 \in \mathbb{N}$ such that $\Rightarrow^{n_x}((\qinit, n_1, n_2)) = 
(\qend, n_1', n_2')$. 

\noindent \textbf{$\APA$: Abstract Persuasion Argumentation.}
Let $\mathcal{A}$ be a class of abstract entities that we understand 
as arguments, whose member is referred to by $a$ with or without a subscript and a superscript, 
and whose subset is referred to by $A$ with or without a subscript and a superscript. 
$\APA$  \cite{ArisakaSatoh18} is a tuple  
$(A, R, \Rp, \Ainit, \hookrightarrow)$ 
for $\Ainit \subseteq A$; $R: A \times A$; 
$\Rp:  
 A \times (A \cup \{\epsilon\}) \times 
A$; and 
 $\hookrightarrow: 2^A \times (2^{(A, R)} \times 2^{(A, R)})$. It extends 
 Dung argumentation $(A, R)$ \cite{Dung95} conservatively. 
$(a_1, a_2) \in R$
is drawn graphically as $a_1 \rightarrow a_2$; 
$(a_1, \epsilon, a_2) \in \Rp$ 
is drawn graphically as $a_1 \multimap a_2$;
and $(a_1, a_3, a_2) \in \Rp$ is drawn graphically as 
$a_1 \dashrightarrow a_3 \lol{\!a_1}\ a_2$.

Let $F(A_1)$ for some $A_1 \subseteq A$
denote $(A_1, R \cap (A_1 \times A_1))$, 
$F(A_1)$ is said to 
be a state. $F(\Ainit)$ is called the initial state in particular. 
In any state $F(A_x)$, any member of $A_x$ is said to be visible 
in $F(A_x)$, while the others are said to be invisible in $F(A_x)$. 
A state $F(A_1)$ is said to be reachable iff $F(A_1) = F(\Ainit)$ or else there is some 
$F(A_2)$ such that $F(A_2)$ is reachable and that $F(A_2) \hookrightarrow^{A_x} F(A_1)$ 
for some $A_x \subseteq A$ called a reference set for the transition.

 $a_1 \in A$ is said to attack $a_2 \in A$ in a state $F(A_1)$ iff 
 $a_1, a_2 \in A_1$ $\andC$ $(a_1, a_2) \in R$.
 For $a_1, a_2, a_3 \in A$, 
 $a_1$ is said to be: inducing $a_2$ in a state $F(A_1)$ with respect to a reference set 
 $A_x \subseteq A$ iff 
 $a_1 \in A_1$ $\andC$
 $(a_1, \epsilon, a_2) \in \Rp$ $\andC$ 
 $a_1$ is not attacked by any member of $A_x$ in $F(A_1)$; 
 and converting $a_3$ into $a_2$ in a state $F(A_1)$ with respect to a reference set 
 $A_x \subseteq A$ iff $a_1, a_3 \in A_1$ $\andC$
 $(a_1, a_3, a_2) \in \Rp$ $\andC$ 
 $a_1$ is not attacked by any member of $A_x$ in $F(A_1)$. 
The set of all members of 
$\Rp$ that are inducing or converting in $F(A_1)$ with respect to a reference set 
$A_x \subseteq A$ are denoted by $\Gamma^{A_x}_{F(A_1)}$.

\indent \textbf{Interpretation of $\hookrightarrow$.} The interpretation 
given of $\hookrightarrow$ in \cite{ArisakaSatoh18} is: (1) 
any subset $\Gamma$ of $\Gamma^{A_x}_{F(A_1)}$ can be simultaneously 
considered for transition for $F(A_1)$ into some $F(A_2)$; (2) 
if $\Gamma \subseteq \Gamma^{A_x}_{F(A_1)}$ is considered for transition 
into $F(A_2)$, then: (2a) if either $(a_1, \epsilon, a_2)$ or 
$(a_1, a_3, a_2)$ is in $\Gamma$, then 
$a_2 \in A_2$; (2b) if 
$(a_1, a_3, a_2)$ is in $\Gamma$, then 
$a_3$ is not in $A_2$ unless it is judged to be in $A_2$ by (2a); 
and (2c) if $a \in A_1$, then $a \in A_2$ unless 
it is judged not in $A_2$ by (2b). 

In other words, for $A_1 \subseteq A$  
and for $\Gamma \subseteq \Rp$,  
let $\nve^{A_1}(\Gamma)$ 
   be $\{a_x \in A_1 \ | \ \exists a_1, a_2 \in A_1.(a_1, a_x, a_2) 
    \in \Gamma\}$, 
   and let $\pve^{A_1}(\Gamma)$  
   be $\{a_2 \in A \ | \ \exists a_1, \alpha \in A_1 \cup 
   \{\epsilon\}.(a_1, \alpha, a_2)
   \in \Gamma\}$. 
   For $A_x \subseteq A$, $F(A_1)$ and $F(A_2)$, we have:
   $(A_x, F(A_1), F(A_2)) \in\ \hookrightarrow$, alternatively
   $F(A_1) \hookrightarrow^{A_x} F(A_2)$,  iff
   there is some $\emptyset \subset \Gamma \subseteq 
\Gamma^{A_x}_{F(A_1)} \subseteq \Rp$
   such that 
   $A_2 = (A_1 \backslash \nve^{A_1}(\Gamma))
    \cup \pve^{A_1}(\Gamma)$.

\textbf{State-wise acceptability semantics.} We touch upon state-wise $\APA$ acceptability semantics, only briefly, since they 
are not required in this work. $A_1 \subseteq A$ is said to be conflict-free 
in a (reachable) state $F(A_a)$ iff 
no member of $A_1$ attacks 
a member of $A_1$ in $F(A_a)$. 
$A_1 \subseteq A$ is said to defend $a \in A$
in $F(A_a)$ iff, if  $a \in A_a$, 
then both: (1) every $a_u \in A_a$ attacking $a$ 
in $F(A_a)$ is attacked 
by at least one member of $A_1$ in $F(A_a)$ (counter-attack); $\andC$ (2) there is no state $F(A_b)$ 
such that both $F(A_a) \hookrightarrow^{A_1} F(A_b)$ 
and $a \not\in A_b$ at once (no elimination). 
 $A_1 \subseteq A$ is said to be proper 
in $F(A_a)$ 
iff $A_1 \subseteq A_a$. 

$A_1 \subseteq A$ is said to be: admissible 
in $F(A_a)$ iff $A_1$ is conflict-free, proper and
defends every member of $A_1$ in $F(A_a)$; and
complete in $F(A_a)$ iff $A_1$ is admissible and 
includes every $a \in A$ it defends in $F(A_a)$.\\ 

\section{Encoding of Minsky Machines into $\APA$}
Assume a two-counter Minsky machine $(Q, n_1, n_2, \Rightarrow, 
\qinit, \qend, \mathcal{I})$. For the encoding into 
$\APA$, assume injective functions: $\sigma^1, \sigma^2: \mathbb{N} \rightarrow \mathcal{A}$;
$\sigma^Q: Q \rightarrow \mathcal{A}$; 
$\sigma^{\mathcal{I}}, \sigma^{\mathcal{I}c}: \mathcal{I} \rightarrow \mathcal{A}$,
such that for any two distinct 
$x, y \in \{1,2,Q,\mathcal{I},\mathcal{I}c\}$,
$\textsf{range}(\sigma^x) \cap \textsf{range}(\sigma^y)
= \emptyset$. Assume $A$ to be the set that satisfies all the following. $A$ is naturally unbounded. 
\begin{itemize}
    \item[] {\small $\sigma^1(n), \sigma^2(n) \in A$ iff 
        $n \in \mathbb{N}$. \quad 
   $\sigma^Q(q) \in A$ iff $q \in Q$. \quad 
     $\sigma^{\mathcal{I}}(x), \sigma^{\mathcal{I}c}(x) \in A$ iff $x \in \mathcal{I}$.}
\end{itemize}
We denote the subset of $A$ consisting of: all $\sigma^1(n)$ by $A^1$; all $\sigma^2(n)$ by $A^2$; 
all $\sigma^Q(q)$ by $A^Q$; all $\sigma^{\mathcal{I}}(x)$ by $A^{\mathcal{I}}$; 
and all $\sigma^{\mathcal{I}c}(x)$ by $A^{\mathcal{I}c}$. Assume $R = \emptyset$. Assume $\Rp$ as the set intersection of all
the sets that satisfy:  
\begin{enumerate}
    \item $(\sigma^{\mathcal{I}}(x), \sigma^{\mathcal{Q}}(q_1),  
            \sigma^{\mathcal{I}c}(x)) \in \Rp$ for every $x \equiv (q_1, i, q_2, u)$ 
            for some $i \in \{1,2\}$, 
            $q_1 \in (Q \backslash \{\qend\}), q_2 \in Q$, 
            $u \in Q \cup \{\epsilon\}$ and for every
            $n$.
    \item $(\sigma^{\mathcal{I}c}(x), 
          \sigma^i(n), \sigma^i(n+1))
          \in \Rp$ for every 
          $x \equiv (q_1, i, q_2, \epsilon) \in \mathcal{I}$ for some   
        $q_1 \in (Q \backslash \{\qend\}), q_2 \in Q$, $i \in \{1,2\}$, and for every $n$. 
    \item $(\sigma^{\mathcal{I}c}(x), 
          \sigma^i(n), \sigma^i(n-1))
          \in \Rp$ for every 
          $x \equiv (q_1, i, q_2, q_3) \in \mathcal{I}$ for some   
        $q_1 \in (Q \backslash \{\qend\}), q_2, q_3 \in Q$, $i \in \{1,2\}$, and for every $n \not= 0$. 
    \item $(\sigma^i(n), \sigma^{\mathcal{I}c}(x),
        \sigma^Q(q_2)) \in \Rp$ 
        for every $x \equiv (q_1, i, q_2, \epsilon) \in \mathcal{I}$ for 
        some $q_1 \in (Q \backslash \{\qend\})$, $q_2 \in Q$, 
        $i \in \{1,2\}$, and for every 
        $n$.
         \item $(\sigma^i(n), \sigma^{\mathcal{I}c}(x),
        \sigma^Q(q_3)) \in \Rp$ 
        for every $x \equiv (q_1, i, q_2, q_3) \in \mathcal{I}$ for 
        some $q_1 \in (Q \backslash \{\qend\})$, $q_2, q_3 \in Q$, 
        $i \in \{1,2\}$, and for every 
        $n \not= 0$.
    \item $(\sigma^i(0), \sigma^{\mathcal{I}c}(x),
        \sigma^Q(q_2)) \in \Rp$ 
        for every $x \equiv (q_1, i, q_2, q_3) \in \mathcal{I}$ for 
        some $q_1 \in (Q \backslash \{\qend\})$, $q_2, q_3 \in Q$, 
        $i \in \{1,2\}$.
\end{enumerate}
Assume $\Ainit$ to be $\{\sigma^1(n_1), \sigma^2(n_2), \sigma^Q(\qinit)\} \cup 
\bigcup_{x \in \mathcal{I}}\sigma^{\mathcal{I}}(x)$. 
\begin{theorem}[Turing-completeness]{\ }\\
Such $(A, R, \Rp, \Ainit, \hookrightarrow)$ simulates 
$(Q, n_1, n_2, \Rightarrow, \qinit, \qend, \mathcal{I})$. 
\end{theorem}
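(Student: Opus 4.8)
The plan is to set up a correspondence between Minsky configurations and distinguished $\APA$ states and to show that each $\Rightarrow$-step is realised by a fixed-length block of $\hookrightarrow$-transitions. For a configuration $(q,k_1,k_2)$ put
$$\mathrm{Enc}(q,k_1,k_2)=F\bigl(\{\sigma^1(k_1),\sigma^2(k_2),\sigma^Q(q)\}\cup\{\sigma^{\mathcal{I}}(x)\mid x\in\mathcal{I}\}\bigr),$$
so that $\mathrm{Enc}(\qinit,n_1,n_2)=F(\Ainit)$. I would first record the simplification coming from $R=\emptyset$: no argument is ever attacked, so for every $A_x$ and every $F(A_1)$ the set $\Gamma^{A_x}_{F(A_1)}$ is exactly the collection of those members of $\Rp$ whose source — and, for a conversion, whose middle argument — is visible in $F(A_1)$, independently of $A_x$. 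Hence the reference set plays no role, and a transition reduces to: choose a non-empty $\Gamma$ of applicable conversions, delete the middle arguments $\nve^{A_1}(\Gamma)$, and adjoin the products $\pve^{A_1}(\Gamma)$. Note also that no $\sigma^{\mathcal{I}}(x)$ ever occurs as a middle argument or as a product, so the instruction arguments are present, unchanged, in every reachable state.

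The engine is the forward simulation, by a case split on the instruction $x=(q_1,i,q_2,u)$ with source $q_1\neq\qend$. Each block opens with the single Rule-(1) conversion $(\sigma^{\mathcal{I}}(x),\sigma^Q(q_1),\sigma^{\mathcal{I}c}(x))$, deleting the control token $\sigma^Q(q_1)$ and installing $\sigma^{\mathcal{I}c}(x)$; it then closes with one transition firing two conversions simultaneously:
\begin{itemize}
\item[] Increment ($u=\epsilon$): fire Rule-(2) $(\sigma^{\mathcal{I}c}(x),\sigma^i(k_i),\sigma^i(k_i+1))$ with Rule-(4) $(\sigma^i(k_i),\sigma^{\mathcal{I}c}(x),\sigma^Q(q_2))$; then $\nve=\{\sigma^i(k_i),\sigma^{\mathcal{I}c}(x)\}$, $\pve=\{\sigma^i(k_i+1),\sigma^Q(q_2)\}$, reaching the encoding of $q_2$ with counter $i$ raised to $k_i+1$.
\item[] Decrement, $k_i>0$ ($u=q_3$): fire Rule-(3) $(\sigma^{\mathcal{I}c}(x),\sigma^i(k_i),\sigma^i(k_i-1))$ with Rule-(5) $(\sigma^i(k_i),\sigma^{\mathcal{I}c}(x),\sigma^Q(q_3))$, both legal as $k_i\neq0$, reaching the encoding of $q_3$ with counter $i$ lowered to $k_i-1$.
\item[] Zero test, $k_i=0$ ($u=q_3$): only Rule-(6) $(\sigma^i(0),\sigma^{\mathcal{I}c}(x),\sigma^Q(q_2))$ applies, since (3) and (5) require a visible $\sigma^i(n)$ with $n\neq0$; it reaches the encoding of $q_2$ with both counters unchanged.
\end{itemize}
Thus, whenever $(q_1,k_1,k_2)\Rightarrow(q',k_1',k_2')$, the state $\mathrm{Enc}(q_1,k_1,k_2)$ reaches $\mathrm{Enc}(q',k_1',k_2')$ in two transitions. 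The decrement case is where the simultaneity is indispensable: Rule-(5) must consume the still-nonzero $\sigma^i(k_i)$ to jump to $q_3$ before Rule-(3) lowers it. An induction on run length then gives $F(\Ainit)\hookrightarrow^{*}F(A)$ with $\sigma^Q(\qend)\in A$ whenever the machine halts.

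I expect the converse — that $\sigma^Q(\qend)$ is reachable only if the machine halts — to be the main obstacle, and it is intrinsic to the ``any non-empty $\Gamma$'' semantics, which allows partial firings. For instance, firing Rule-(4) without its partner Rule-(2) jumps to $q_2$ while leaving $\sigma^i(k_i)$ untouched, and repeatedly firing Rule-(3) alone drives a positive counter down to $0$ and then lets the zero branch (6) send control to $q_2$ rather than the intended $q_3$; either move reaches a well-formed encoding that the machine never visits. To close this gap I would prove a structural invariant — every reachable state consists of all of $\{\sigma^{\mathcal{I}}(x)\}$, at most one $A^1$- and one $A^2$-argument, and exactly one control token from $A^Q\cup A^{\mathcal{I}c}$ — and then, from each committed token $\sigma^{\mathcal{I}c}(x)$, enumerate the control tokens producible before the next member of $A^Q$ appears, checking they agree with $\Rightarrow$. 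Under the semantics as stated this requires genuinely ruling out the partial firings above; it becomes immediate if firing is constrained to the maximal applicable $\Gamma$, in which case the two conversions of each block are forced to be taken together. Absent such a constraint, the robust reading that still delivers Turing-completeness is the forward (existential) simulation of the preceding paragraph: every Minsky computation is carried out by some $\hookrightarrow$-run.
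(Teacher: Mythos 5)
Your forward simulation is exactly the paper's proof: the same configuration encoding, the same two-transition block per $\Rightarrow$-step (a Rule-(1) conversion consuming $\sigma^Q(q_1)$ and installing $\sigma^{\mathcal{I}c}(x)$, then a single transition firing Rules (2)+(4) together, or (3)+(5) together, or (6) alone), and the same induction on run length with the same case split; the paper likewise notes that $R=\emptyset$ makes the reference set irrelevant. The only item you drop is the paper's base case, which is worth keeping: since no instruction has source $\qend$, every Rule-(1) conversion has its middle argument in $A^Q\setminus\{\sigma^Q(\qend)\}$, Rules (2)--(6) need a visible $\sigma^{\mathcal{I}c}(x)$, and every member of $\Rp$ is a conversion, so the encoding of a halted configuration has no outgoing transition at all. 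That deadlock-preservation fact is the only converse-flavoured statement the paper proves.

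As for the full converse you worry about: the paper does not attempt it, and your instinct that it genuinely fails under the ``any non-empty $\Gamma$'' semantics is correct. Concretely, take $\mathcal{I}=\{(q_0,1,q_1,\epsilon),\,(q_1,1,\qend,q_0)\}$ starting from $(q_0,0,0)$: the machine cycles $(q_0,0,0)\Rightarrow(q_1,1,0)\Rightarrow(q_0,0,0)\Rightarrow\cdots$ and never halts, yet in the \APA{} one can take the Rule-(1) step for the first instruction, then fire Rule-(4) \emph{without} its partner Rule-(2), reaching the encoding of $(q_1,0,0)$; from there Rule-(1) for the second instruction followed by the zero-test Rule-(6) produces $\sigma^Q(\qend)$. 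So halting is not equivalent to reachability of the $\qend$-token for this encoding, and your proposed structural invariant cannot close the gap without changing either the semantics (e.g.\ forcing maximal applicable $\Gamma$) or the construction itself. The theorem's word ``simulates'' is given only the existential, forward reading in the paper, which is precisely what the first half of your proposal establishes; your additional analysis is not a defect of your proof but a genuine limitation of the paper's result that it leaves unaddressed.
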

\begin{proof} 
By induction on $k$ of $\Rightarrow^k((q_1, n_1, n_2)) = (q_2, n'_1, n'_2)$, we show that 
there is a corresponding $\APA$ transition 
$F(A^{\mathcal{I}} \cup \sigma^Q(q_1) \cup 
\sigma^1(n_1) \cup 
\sigma^2(n_2)) \hookrightarrow^{A_x} 
\cdots \hookrightarrow^{A_x}
F(A^{\mathcal{I}} \cup \sigma^Q(q_2) \cup 
\sigma^1(n'_1) \cup \sigma^2(n'_2)
  )$. Since $R =   \emptyset$, it is not important what $A_x$ here is.

If there is no $0 < k$, then $q_1 = \qend$, 
and there is no $\APA$ state $F(A_2)$ such that 
$F(A^{\mathcal{I}} \cup \sigma^Q(\qend) \cup \sigma^1(n_1) \cup \sigma^2(n_2)))
\hookrightarrow^{A_x} F(A_2)$, for: (1) every 
$a \in A^{\mathcal{I}}$ is converting only members of $A^Q \backslash \{\sigma^Q(\qend)\}$ in any state; 
(2) $\sigma^i(n_i)$, $i \in \{1,2\}$, is not converting any member of $A^{\mathcal{I}}$, of $A^Q$, of $A^1$ or of $A^2$;
(3) no member of $A^Q$ is converting any member of $A$; and (4) every member of $\Rp$ 
is a conversion. 

For inductive cases, assume that the correspondence holds for any $k \leq j$. 
We show by cases that it holds for $k = j+1$ as well. 
\begin{center}
    \includegraphics[scale=0.09]{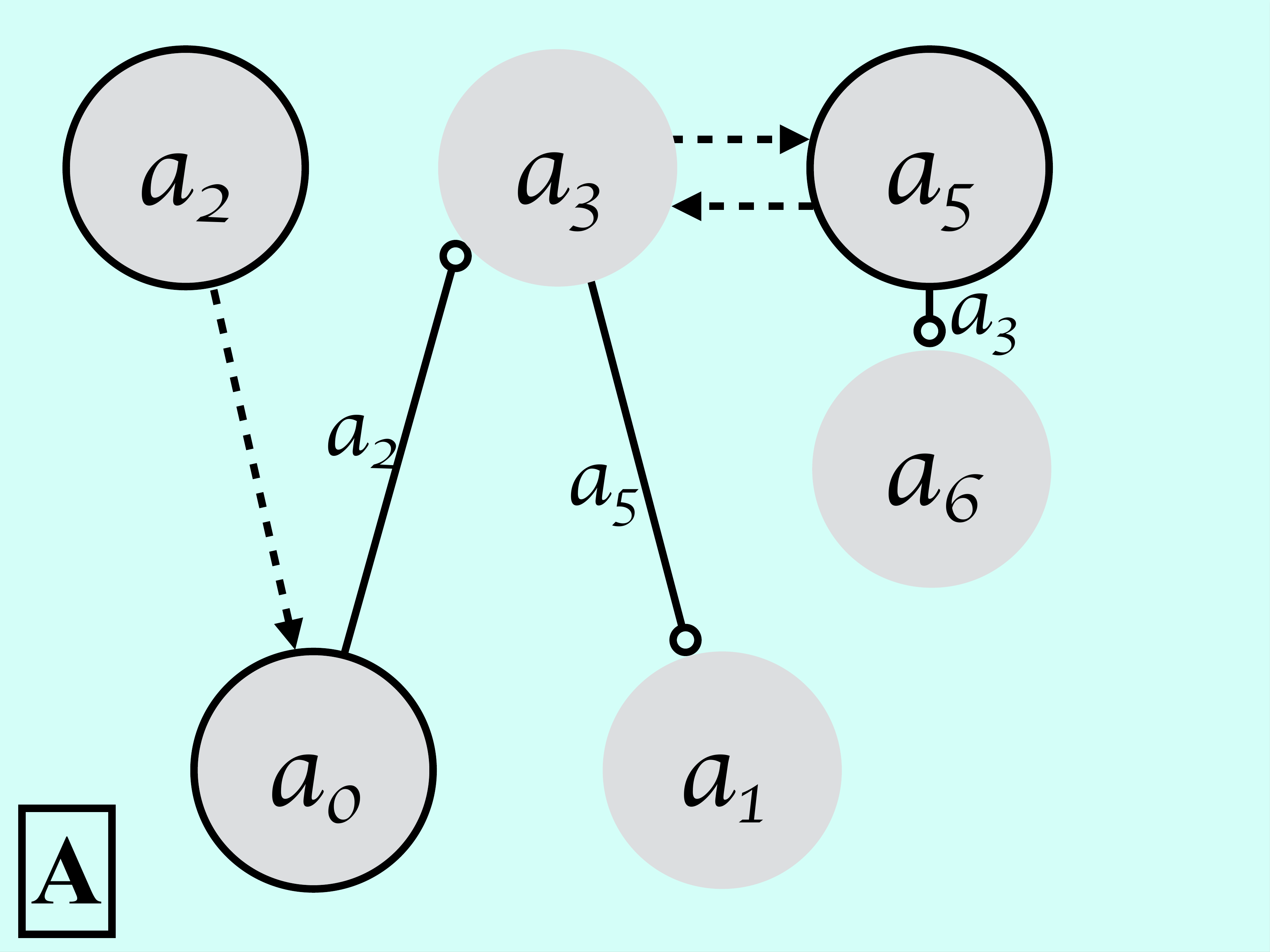}
    \includegraphics[scale=0.09]{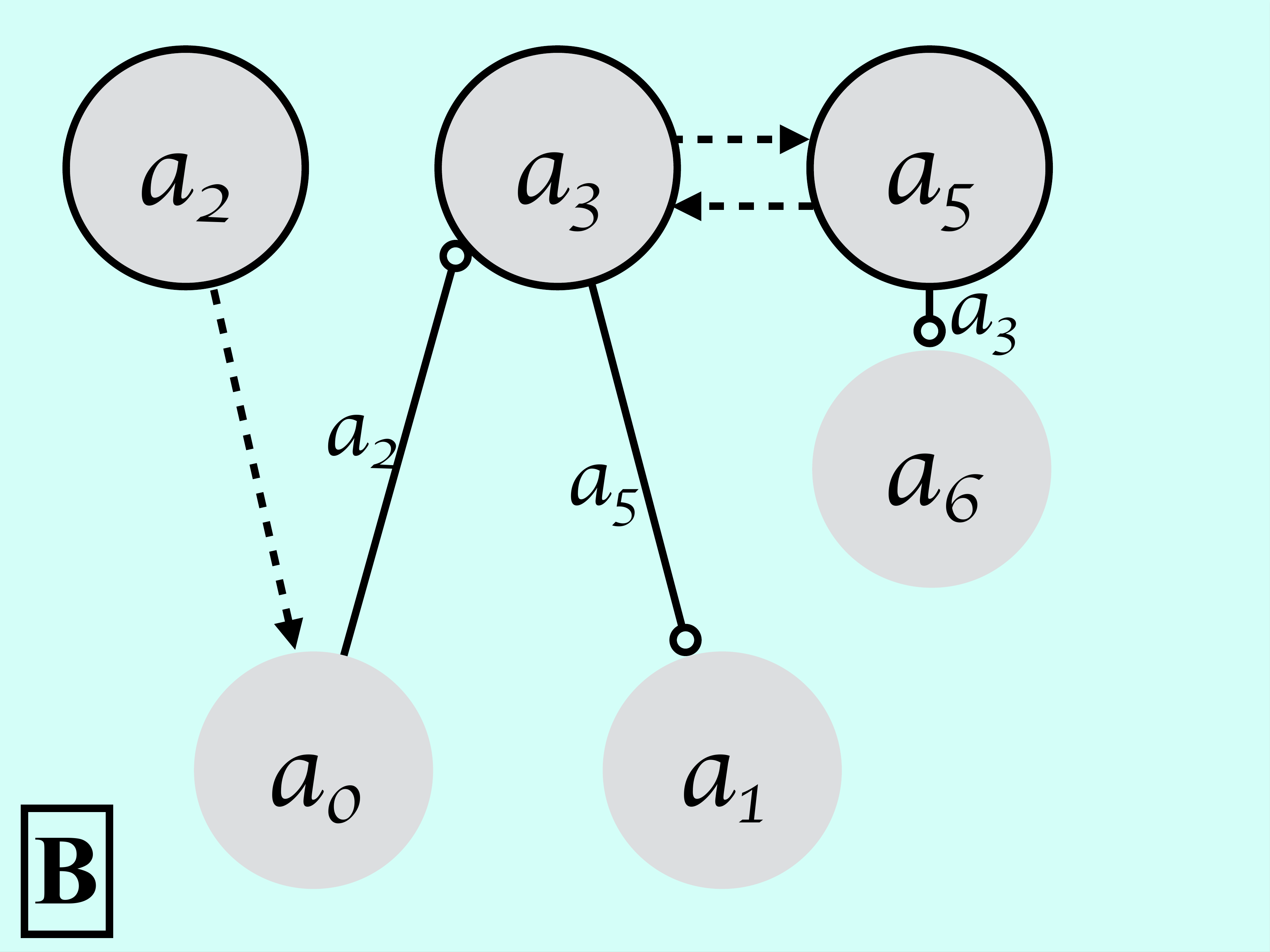}
    \includegraphics[scale=0.09]{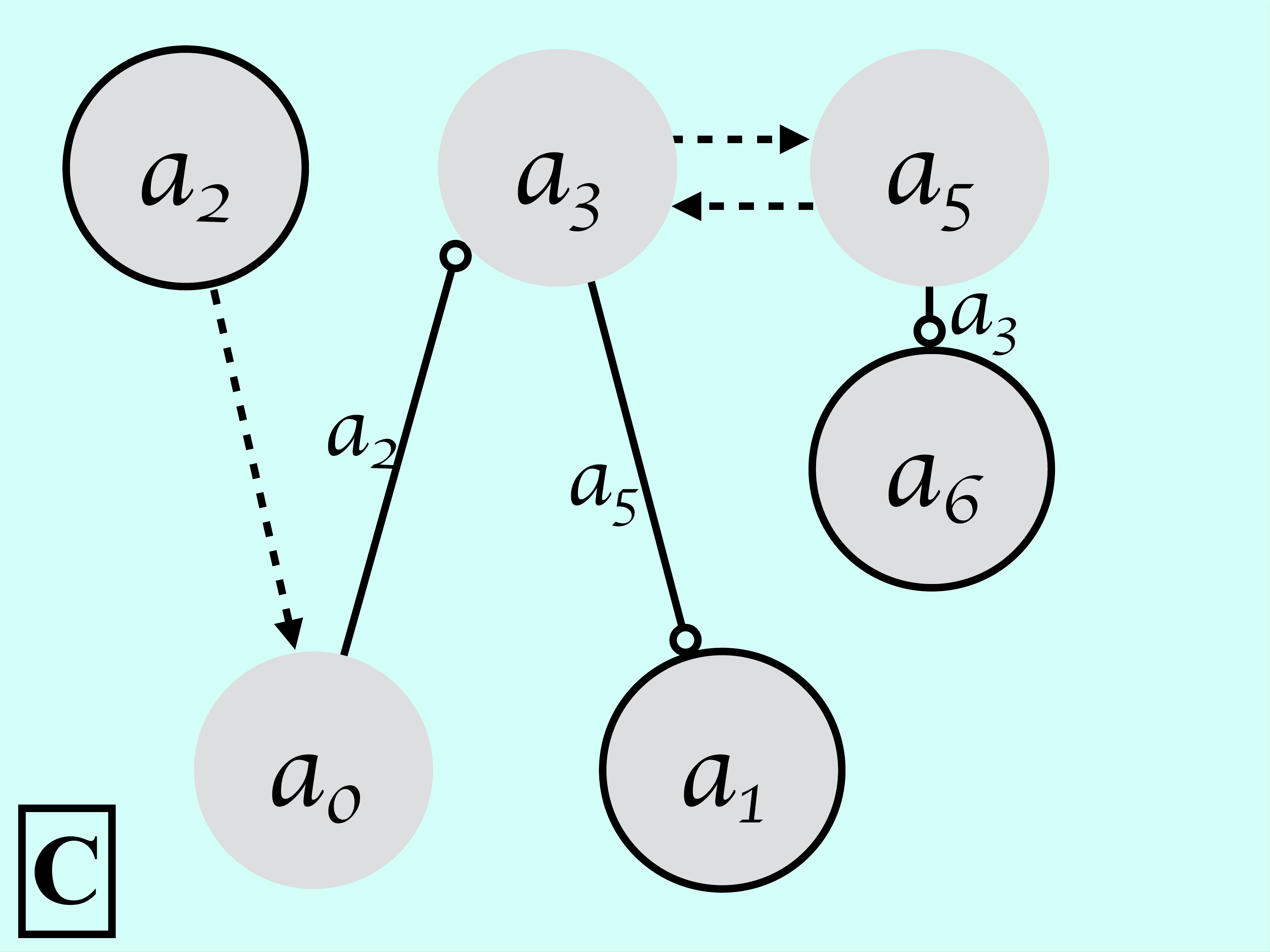}
\end{center}
\begin{description}
   \item[Case 1] $(q_1, n_1, n_2) \underbrace{\Rightarrow \cdots \Rightarrow}_j (q_{2}, n_3, n_4) 
   \Rightarrow (q_3, n_3 + 1, n_4)$: By induction hypothesis, we have 
   $F(A^{\mathcal{I}} \cup \sigma^Q(q_1) \cup \sigma^1(n_1) \cup \sigma^2(n_2))
   \hookrightarrow^{A_x} \cdots \hookrightarrow^{A_x} F(A^{\mathcal{I}} \cup \sigma^Q(q_2) \cup \sigma^1(n_3) \cup \sigma^2(n_4)) \equiv F(A_1)$. 
      
     {\ }\indent A relevant snippet of the $\APA$ for the $(j+1)$-th Minsky machine computation is shown in \fbox{A}. All visible arguments in $F(A_1)$ are bordered. 
       Assume $x \equiv (q_2, 1, q_3, \epsilon)$, 
       $a_0 = \sigma^Q(q_2), a_1 = \sigma^Q(q_3)$, $a_2 = \sigma^{\mathcal{I}}(x)$, 
       $a_3 = \sigma^{\mathcal{I}c}(x)$,  
       $a_5 = \sigma^1(n_1)$, and $a_6 = \sigma^1(n_1+ 1)$. 
       Then, 
       $a_0$ and $a_1$ represent states of Minsky machine, 
       $a_5$ and $a_6$ represent the first counter's content, 
       while $a_3$ is an auxiliary operational $\APA$ argument. 
       
       {\ }\indent By the construction of the $\APA$, we have $(a_2, a_0, a_3) \in \Rp$. 
       Furthermore, among all $a \in A^Q$, only $a_0$ is 
       in $F(A_1)$. There then 
       exists a transition: $F(A_1) \hookrightarrow^{A_x} 
       F((A_1 \backslash \{a_0\}) \cup \{a_3\}) \equiv F(A_2)$ (\fbox{B}). 
       
       {\ }\indent By the construction of the $\APA$, among all $a \in A^1$, 
       only $a_5$ is in $F(A_1)$, which is true also in 
       $F(A_2)$. Of all $\APA$ arguments 
       converting (converted by) $a_5$ in $F(A_2)$, only $a_3$ is in  $F(A_2)$. 
       There exists a $\APA$ transition 
       $F(A_2) \hookrightarrow^{A_x} F((A_1 \backslash \{a_3, a_5\}) \cup 
       \{a_1, a_6\}) \equiv F(A_3)$ (\fbox{C}). However, 
       $A_3 = A^{\mathcal{I}} \cup \sigma^Q(q_3) \cup \sigma^1(n_1+1) \cup \sigma^2(n_2)$, as required. 
  \item[Case 2]  $(q_1, n_1, n_2) \underbrace{\Rightarrow \cdots \Rightarrow}_j (q_{2}, n_3, n_4) 
   \Rightarrow (q_3, n_3, n_4 + 1)$: similar. 
   \item[Case 3] $(q_1, n_1, n_2) \underbrace{\Rightarrow \cdots \Rightarrow}_j (q_{2}, n_3, n_4) 
   \Rightarrow (q_4, n_3 - 1, n_4)$: By induction hypothesis, we have 
   $F(A^{\mathcal{I}} \cup \sigma^Q(q_1) \cup \sigma^1(n_1) \cup \sigma^2(n_2))
   \hookrightarrow^{A_x} \cdots \hookrightarrow^{A_x} F(A^{\mathcal{I}} \cup \sigma^Q(q_2) \cup \sigma^1(n_3) \cup \sigma^2(n_4)) \equiv F(A_1)$. 
      {\ }\indent A relevant snippet of the $\APA$ for the $(j+ 1)$-th Minsky machine  computation  is  as  shown  in   \fbox{D}.  
           
\begin{center}
    \includegraphics[scale=0.10]{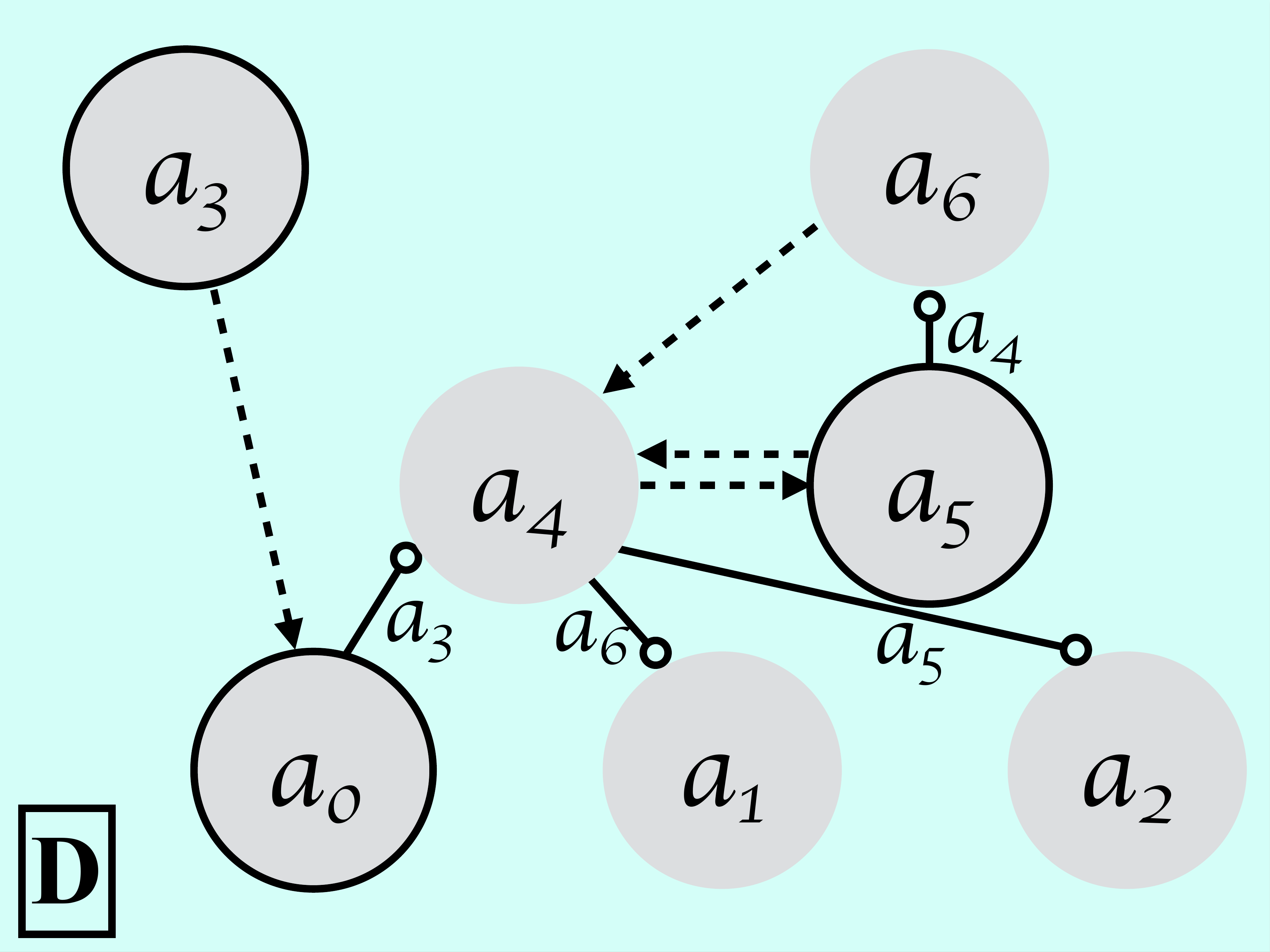}
    \includegraphics[scale=0.10]{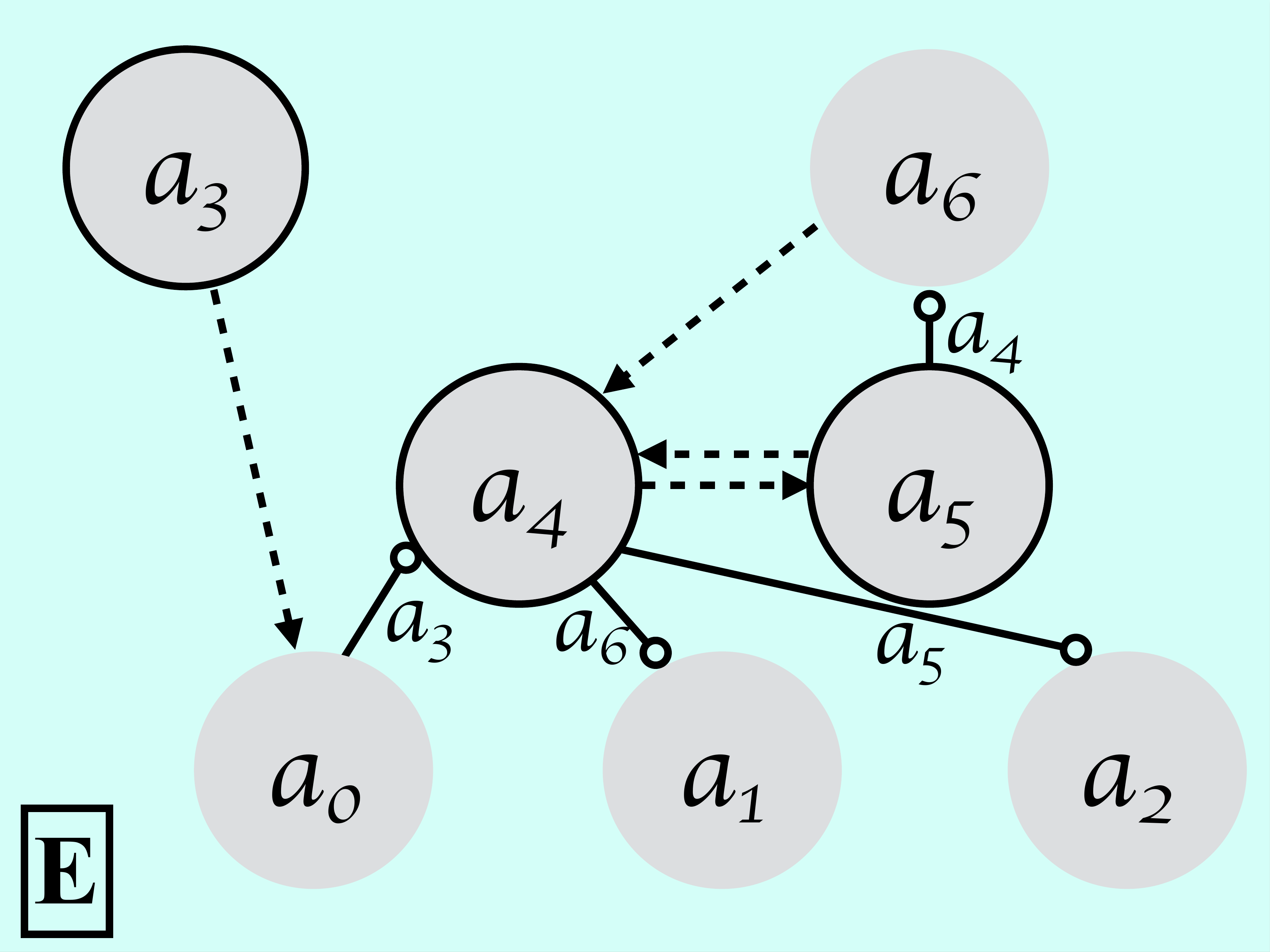}
    \includegraphics[scale=0.10]{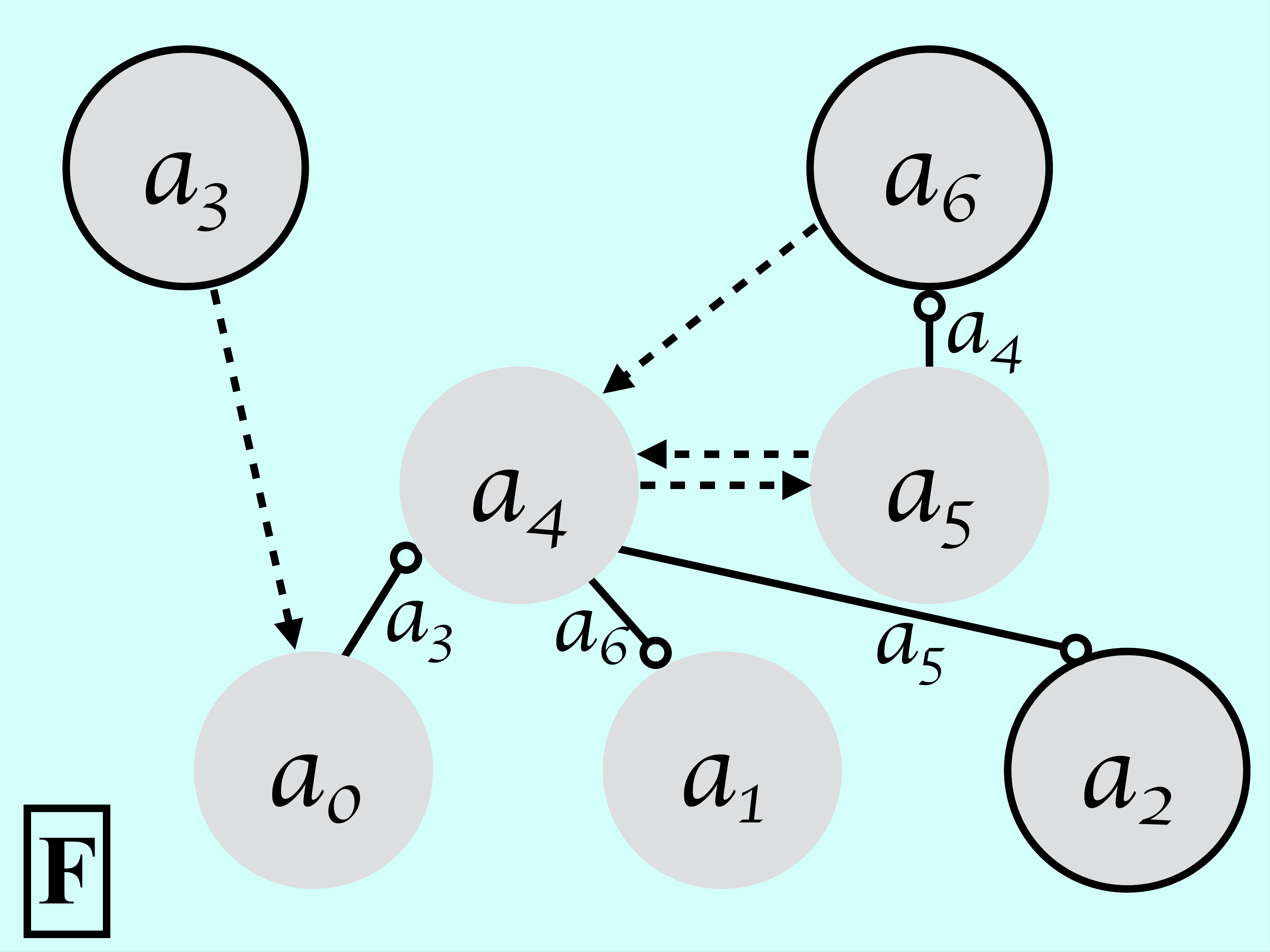}
\end{center}

      Assume $x \equiv (q_2, 1, q_3, q_4)$, 
       $a_0 = \sigma^Q(q_2), a_1 = \sigma^Q(q_3)$, $a_2 = \sigma^Q(q_4)$, $a_3 = \sigma^{\mathcal{I}}(x)$, 
       $a_4 = \sigma^{\mathcal{I}c}(x)$,  
       $a_5 = \sigma^1(n_1)$, and $a_6 = \sigma^1(n_1 - 1)$. $n \not= 0$, and in \fbox{D}, 
       $n$ is in fact assumed to be 1, so that $a_6 = \sigma^1(0)$, purely due to the space available in figure. 
       $a_0$, $a_1$ and $a_2$ represent states of two-counter Minsky machine, 
       $a_5$ and $a_6$ represent the first counter's content, 
       while $a_4$ is an auxiliary operational $\APA$ argument. There exist a sequence of $\APA$
       transitions into \fbox{E}, and then into \fbox{F}, as required. Similar when $1 < n$. 
 \item[Case 4] $(q_1, n_1, n_2) \underbrace{\Rightarrow \cdots \Rightarrow}_j (q_{2}, n_3, n_4) 
   \Rightarrow (q_4, n_3, n_4 - 1)$: similar. 
 \item[Case 5] $(q_1, n_1, n_2) \underbrace{\Rightarrow \cdots \Rightarrow}_j (q_{2}, 0, n_4) 
   \Rightarrow (q_3, 0, n_4)$:  A relevant snippet is shown in \fbox{G}, where $a_6 = \sigma^1(0)$, 
   and there exist $\APA$ transitions into \fbox{H}, and then into \fbox{I}, as required. 
   \begin{center}
         \includegraphics[scale=0.10]{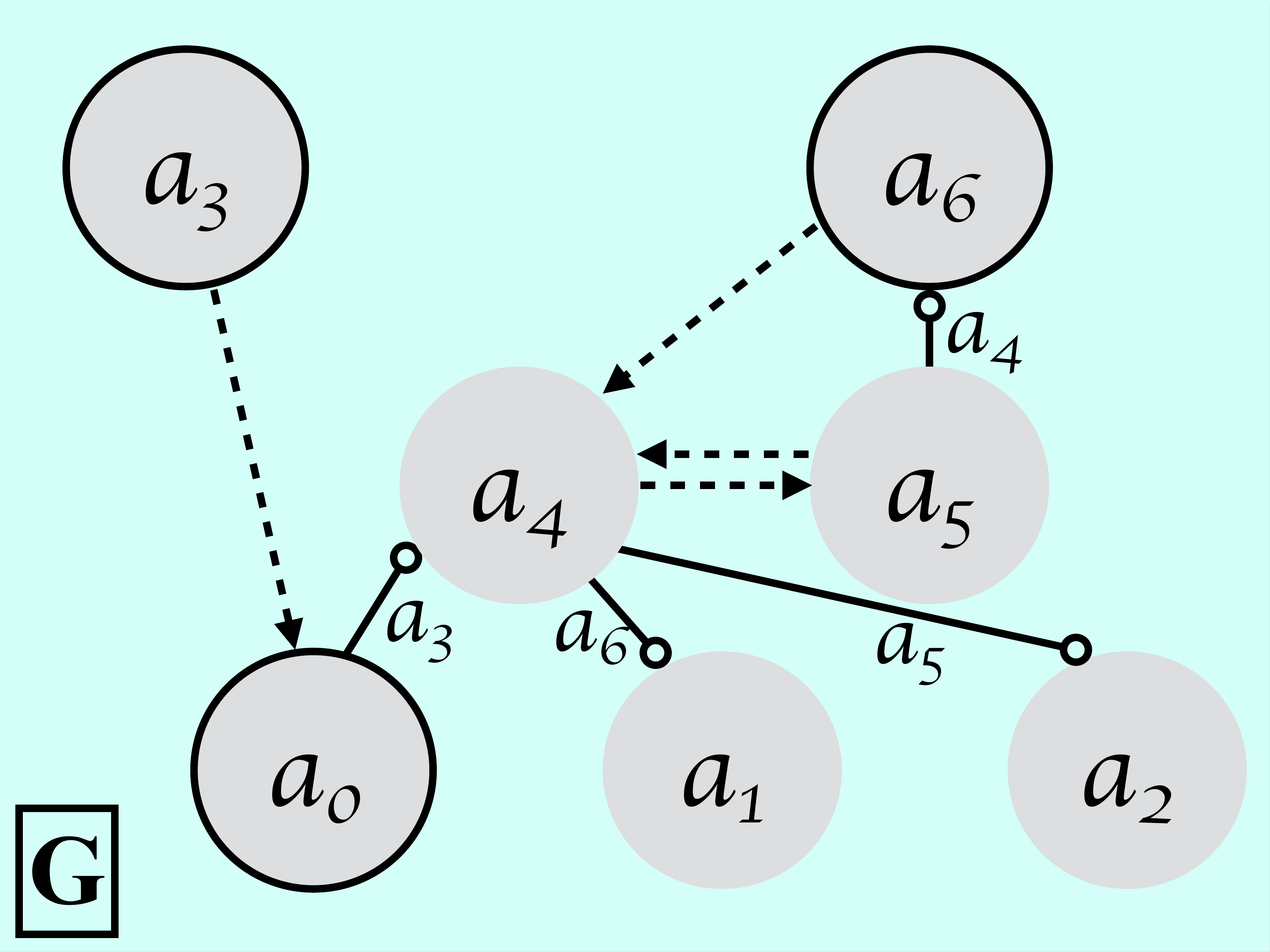}
    \includegraphics[scale=0.10]{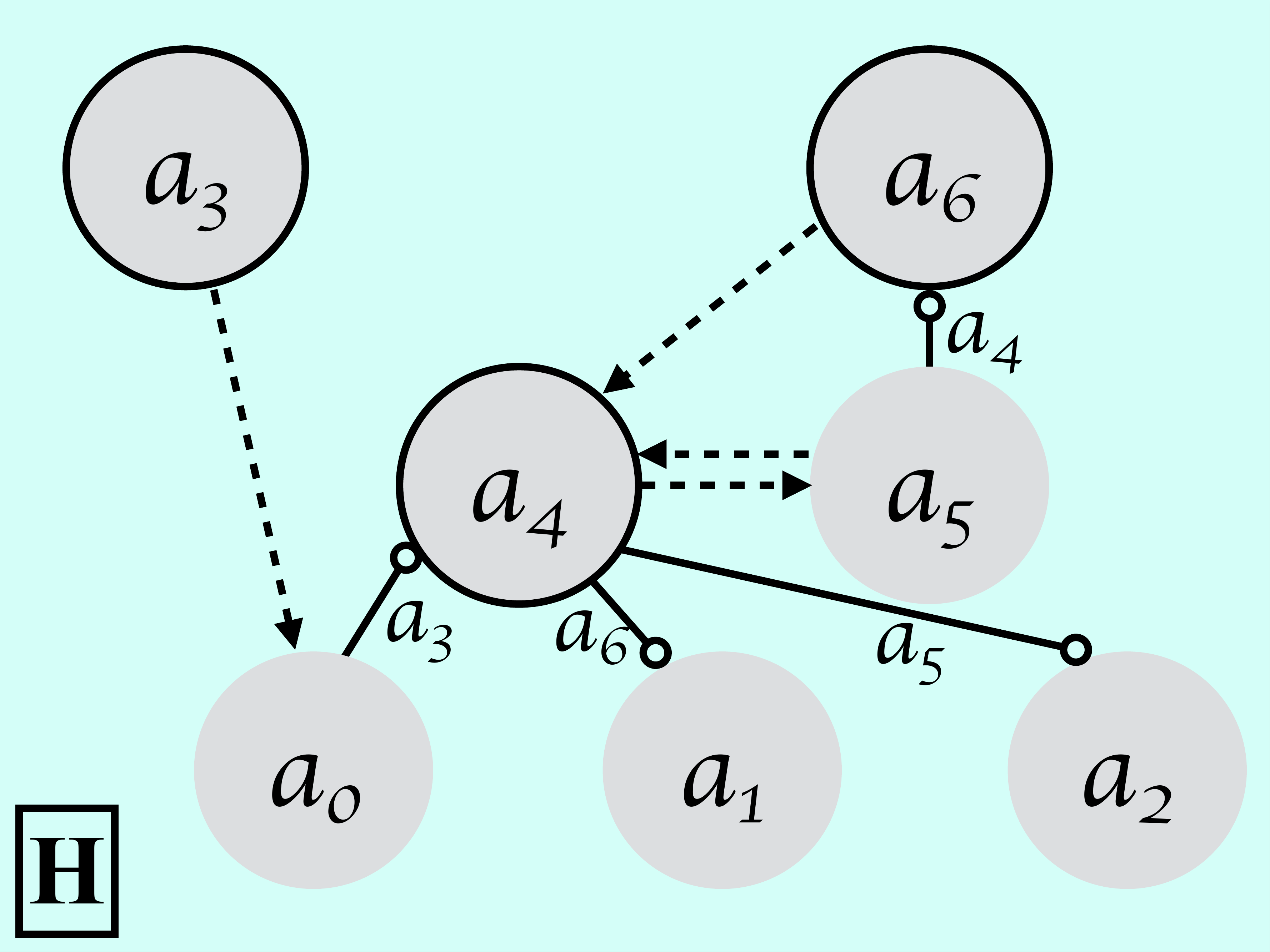}
    \includegraphics[scale=0.10]{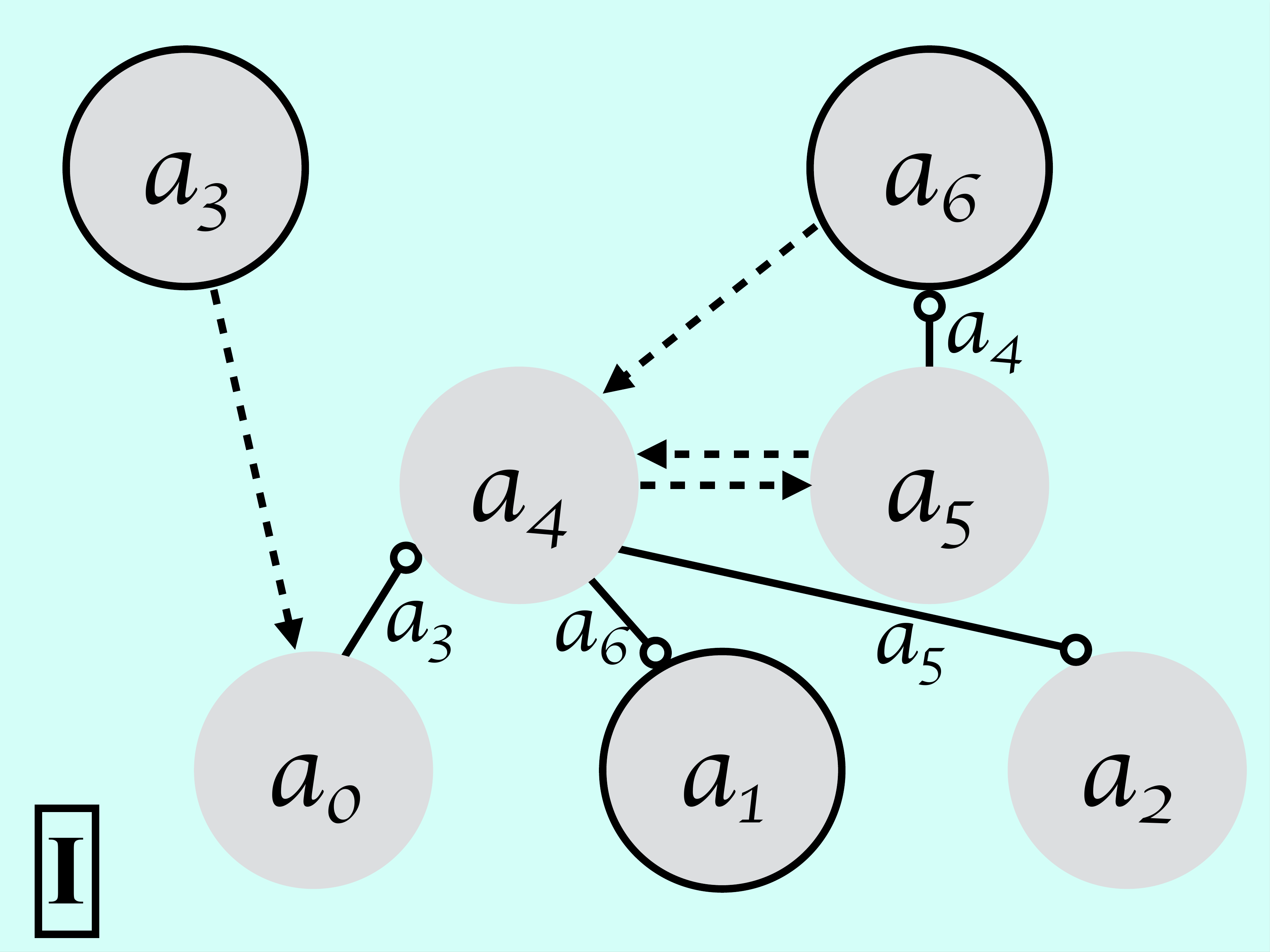}
   \end{center}
  \item[Case 6] $(q_1, n_1, n_2) \underbrace{\Rightarrow \cdots \Rightarrow}_j (q_{2}, n_3, 0) 
   \Rightarrow (q_3, n_3, 0)$: similar.   
   \qed 
\end{description}
\end{proof} 
\noindent \textbf{Conclusion.} We proved Turing-completeness of $\APA$ dynamics. 
\bibliography{references} 

\begin{thebibliography}{1}

\bibitem{ArisakaSatoh18}
R.~Arisaka and K.~Satoh.
\newblock {Abstract Argumentation / Persuasion / Dynamics}.
\newblock In {\em {PRIMA}}, pages 331--343, 2018.

\bibitem{Dung95}
P.~M. Dung.
\newblock On the {Acceptability} of {Arguments} and {Its} {Fundamental} {Role}
  in {Nonmonotonic} {Reasoning}, {Logic Programming}, and n-{Person} {Games}.
\newblock {\em Artificial {Intelligence}}, 77(2):321--357, 1995.

\bibitem{Minsky67}
M.~L. Minsky.
\newblock {\em Computation: Finite and Infinite Machines (Automatic
  Computation)}.
\newblock Prentice Hall, 5th edition, 1967.

\end{thebibliography}
   \bibliographystyle{abbrv}  
\end{document}